\newtheorem{theorem}{Theorem}
\newcommand{\bs}{\boldsymbol}
\newcommand{\ri}{\mathrm{i}\,}
\begin{document}

\begin{flushright}
\footnotesize

Mech. Tverd. Tela

No. 35 (2005), pp. 38-48

(In Russian)

\end{flushright}

\normalsize

\begin{center}

{\Large \bf Bifurcation Diagram \\
of the Generalized 4th Appelrot Class}

\end{center}

\noindent UDC 531.38

\begin{center}
{\bf M. P. KHARLAMOV}

{\footnotesize Presented October 1, 2005}

\end{center}

{\footnotesize The article continues the author's publication in
[Mech. Tverd. Tela, No. 34, 2004], in which the generalizations of
the Appelrot classes of the Kowalevski top motions are found for the
case of the double force field. We consider the analogue of the 4th
Appelrot class. The trajectories of this family fill the surface
which is four-dimensional in the neighborhood of its generic points.
The complete system of two integrals is pointed out. For these
integrals the bifurcation diagram is established and the admissible
region for the corresponding constants is found. }

\section{Introduction}
Consider a rigid body with a fixed point~$O$. Let the principal
moments of inertia at $O$ satisfy the ratio $2:2:1$. Suppose that
the moment of external forces with respect to $O$ has the form
$$
{\bf{e}}_1  \times {\bs{\alpha }} + {\bf{e}}_2  \times {\bs{\beta
}},
$$
where the vectors ${\bf{e}}_1 ,{\bf{e}}_2 $ are fixed in the body
and parallel to the equatorial plane of the inertia ellipsoid, and
${\bs{\alpha }},{\bs{\beta }}$ are the vectors constant in the
inertial space.

It is shown in the work \cite{bib1} that without loss of generality
one can consider ${\bf{e}}_1 ,{\bf{e}}_2 $ to form the orthonormal
pair (in particular, to be the principal inertia unit vectors), and
${\bs{\alpha }},{\bs{\beta }}$ to be mutually orthogonal. Let ${\bf
e}_3 = {\bf e}_1 \times {\bf e}_2 $. Choose $O{\bf e}_1 {\bf e}_2
{\bf e}_3$ as the moving frame. Denote by ${\bs \omega}$ the angular
velocity vector. In the dimension less variables the rotation of the
body is described by the Euler--Poisson equations
\begin{equation}\label{kh1}
\begin{array}{c}
2\dot \omega _1   = \omega _2 \omega _3  + \beta _3 ,\;\quad
2\dot\omega _2 =  - \omega _1 \omega _3  - \alpha _3 ,\; \quad \dot \omega _3 = \alpha _2  - \beta _1 , \\[2mm]
\dot\alpha _1 = \alpha _2 \omega _3  - \alpha _3 \omega _2 ,\; \quad
\dot\beta _1 = \beta _2 \omega _3  - \beta _3 \omega _2 \qquad
(123)\ .
\end{array}
\end{equation}
The (123) symbol means that the remaining equations of the Poisson
group are obtained by the cyclic substitution of the indexes.

The phase space $P^6 $ of system (\ref{kh1}) is defined in
${\bf{R}}^9 ({\bs{\omega }},{\bs{\alpha }},{\bs{\beta }})$ by the
geometric integrals
\begin{equation}\label{kh2}
\alpha _1^2  + \alpha _2^2  + \alpha _3^2  = a^2 ,\quad \beta _1^2
+ \beta _2^2  + \beta _3^2  = b^2 ,\quad \alpha _1 \beta _1  +
\alpha _2 \beta _2  + \alpha _3 \beta _3  = 0.
\end{equation}
We suppose that
\begin{equation}\label{kh3}
a > b > 0.
\end{equation}
Then system (\ref{kh1}), (\ref{kh2}) does not have any cyclic
integrals and is not reducible, by the standard procedure, to a
Hamiltonian system with two degrees of freedom. Nevertheless, it is
completely integrable due to the existence of the first integrals in
involution
\begin{equation}\label{kh4}
\begin{array}{l}
H = \omega _1^2  + \omega _2^2  + {1 \over 2}\omega _3^2  - (\alpha _1  + \beta _2 ),
\\[2mm]
K = (\omega _1^2  - \omega _2^2  + \alpha _1  - \beta _2 )^2  +
(2\omega _1 \omega _2  + \alpha _2  + \beta _1 )^2 , \\[2mm]
G = (\alpha _1 \omega _1  + \alpha _2 \omega _2  + {1 \over 2}\alpha _3 \omega _3 )^2  +
(\beta _1 \omega _1  + \beta _2 \omega _2  + {1
\over 2}\beta _3 \omega _3 )^2  +  \\[2mm]
\qquad {} + \omega _3 (\gamma _1 \omega _1  + \gamma _2 \omega _2 + {1 \over 2}\gamma _3
\omega _3 ) - \alpha _1 b^2  - \beta _2 a^2
\end{array}
\end{equation}
(here $\gamma _i $ stand for the components of the vector
${\bs{\alpha }} \times {\bs{\beta }}$ immovable in space).

The integral $K$ was first shown by O.I.\,Bogoyavlensky~\cite{bib2},
and the integral $G$ (in more general form for a gyrostat) was found
by A.G.\,Reyman and M.A.\,Semenov-Tian-Shansky~\cite{bib3}.

In the work \cite{bib1} the set of critical points is found for the
integral map
\begin{equation}\label{kh5}
H \times K \times G:P^6  \to {\bf{R}}^3.
\end{equation}
It is shown that this set is the union of three sets $\mathfrak{M},
\mathfrak{N}, \mathfrak{O}$, which are almost everywhere the smooth
four-dimensional submanifolds in $P^6 $ and in the neighborhood of
the generic points are defined by two invariant relations.

The first critical set $\mathfrak{M}$ was found in the work
\cite{bib2}. It coincides with the zero level of the integral $K$
and generalizes the $1^{\mathrm{st}}$ Appelrot class in the
Kowalevski problem \cite{bib4}. The phase topology of the dynamical
system induced on $\mathfrak{M}$ was studied by
D.B.\,Zotev~\cite{bib5}.

The motions on the manifold $\mathfrak{N}$ found in \cite{bib6} are
investigated in \cite{bib8, bib7}. It is shown that this family of
motions is the generalization of the so-called {\it especially
remarkable} motions of the $2^{\mathrm{nd}}$ and $3^{\mathrm{rd}}$
Appelrot classes. The bifurcation diagram for the pair of almost
everywhere independent first integrals on $\mathfrak{N}$ is
constructed in \cite{bib7}, the equations on $\mathfrak{N}$ are
separated, the bifurcations of the Liouville tori are studied in
\cite{bib8}.

The present work is devoted to the investigation of some properties
of system (\ref{kh1}) restricted to the invariant subset
$\mathfrak{O}$ given in $P^6 $ by the system of the invariant
relations \cite{bib1}
\begin{equation}\label{kh6}
R_1  = 0,\;R_2  = 0,
\end{equation}
where
$$
\begin{array}{l}
R_1  = (\alpha _3 \omega _2  - \beta _3 \omega _1 )\omega _3  -
2\beta _1 \omega _1^2  + 2(\alpha _1  - \beta _2 )\omega _1 \omega
_2  + 2\alpha _2 \omega _2^2 , \\[2mm]
R_2  = (\alpha _3 \omega _1  + \beta _3 \omega _2 )\omega _3^2  + [\alpha _3^2  + \beta
_3^2  + 2\alpha _1 \omega _1^2  + 2(\alpha _2 + \beta _1 )\omega _1 \omega _2  + 2\beta
_2 \omega _2^2 ]\omega _3
+  \\[2mm]
\qquad {}+ 2\alpha _3 [(\alpha _1  - \beta _2 )\omega _1  + (\alpha _2  + \beta _1
)\omega _2 ] + 2\beta _3 [(\alpha _2  + \beta _1 )\omega _1 - (\alpha _1  - \beta _2
)\omega _2 ].
\end{array}
$$

\section{Partial integrals}
Note that at the points
\begin{equation}\label{kh7}
\omega _1  = \omega _2  = 0,\quad \alpha _3  = \beta _3  = 0
\end{equation}
equations (\ref{kh6}) are dependent on $\mathfrak{O}$. If we assume
that equations (\ref{kh7}) hold during some time interval (and then
identically in $t$ along the whole trajectory), then we come to the
family of pendulum type motions
\begin{equation}\label{kh8}
\begin{array}{c}
{\bs{\alpha }} = a({\bf{e}}_1 \cos \theta  - {\bf{e}}_2 \sin \theta ),\quad {\bs{\beta }}
= \pm b({\bf{e}}_1 \sin \theta  + {\bf{e}}_2 \cos
\theta ),\quad {\bs \alpha}\times{\bs \beta} \equiv  \pm ab{\bf{e}}_3 , \\[2mm]
{\bs{\omega }} = \dot \theta  {\bf{e}}_3 ,\quad \ddot \theta = - (a
\pm b)\sin \theta
\end{array}
\end{equation}
noticed in the work \cite{bib1}. The constants of integrals
(\ref{kh4}) at such trajectories satisfy one of the following:
\begin{equation}\label{kh9}
g = abh,\quad k = (a - b)^2 ,\quad h \geqslant  - (a + b)
\end{equation}
or
\begin{equation}\label{kh10}
g =  - abh,\quad k = (a + b)^2 ,\quad h \geqslant  - (a - b).
\end{equation}
Denote by $\Omega$ the set of points belonging to trajectories
(\ref{kh8}). Let $\mathfrak{O}^ * =\mathfrak{O} \backslash \Omega $.

Recall that in the classical case of S.\,Kowalevski (${\bs{\beta }}
= 0$) there exists the area integral. Traditionally it is
represented with the one-half multiplier
\begin{equation}\label{kh11}
L = \frac {1}{2} {\bf I}{\bs \omega}{\bs \cdot} {\bs \alpha}.
\end{equation}
Then provided that ${\bs{\beta }} = 0$ the integral $G$ turns into
$L^2$.

Let $\ell $ be the constant of integral (\ref{kh11}). According to
G.G.\,Appelrot's classification the $4^{\mathrm{th}}$ class of the
{\it especially remarkable} motions is defined by the following
conditions:

1) the second polynomial of Kowalevski has a multiple root, one of
the Kowalevski variables remains constant and equal to the multiple
root $s$ of the corresponding Euler resolvent defined as ${\varphi
(s) = s(s - h)^2 + (a^2 - k)s - 2\ell ^2}$:
\begin{equation}\label{kh12}
\varphi (s) = 0,\quad \varphi '(s) = 0;
\end{equation}

2) the first two components of the angular velocity are constant and
equal to
\begin{equation}\label{kh13}
\omega _1  =  - {\ell  \over s},\quad \omega _2  = 0.
\end{equation}

The next statement establishes the analogue of conditions
(\ref{kh13}) for the generalized top.

\begin{theorem} For any trajectory in the set $\mathfrak{O}^*$ the values
$\displaystyle{\frac {{\bf I}{\bs \omega}{\bs \cdot}{\bs \alpha}} {
{\bf I}{\bs \omega}{\bs \cdot}{\bf e}_1 }}$ and $\displaystyle{\frac
{ {\bf I}{\bs \omega}{\bs \cdot}{\bs \beta}} { {\bf I}{\bs
\omega}{\bs \cdot}{\bf e}_2 }}$ are equal to each other and
constant.
\end{theorem}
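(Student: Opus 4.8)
The plan is to reduce the statement to two polynomial identities modulo the relations $R_1=0$, $R_2=0$. Since the inertia tensor is $\mathbf{I}=\mathrm{diag}(2,2,1)$, one has $\mathbf{I}\bs\omega=(2\omega_1,2\omega_2,\omega_3)$, so the four scalars in the statement are
\begin{equation*}
p:=\mathbf{I}\bs\omega\cdot\bs\alpha=2\omega_1\alpha_1+2\omega_2\alpha_2+\omega_3\alpha_3,\qquad q:=\mathbf{I}\bs\omega\cdot\mathbf{e}_1=2\omega_1,
\end{equation*}
\begin{equation*}
r:=\mathbf{I}\bs\omega\cdot\bs\beta=2\omega_1\beta_1+2\omega_2\beta_2+\omega_3\beta_3,\qquad s:=\mathbf{I}\bs\omega\cdot\mathbf{e}_2=2\omega_2,
\end{equation*}
and the assertion is that $p/q=r/s$ and that this common value $\lambda$ is a first integral on $\mathfrak{O}^*$.

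First I would dispatch the equality of the two ratios, which is the easy half. A direct expansion gives $ps-qr=2R_1$, so on $\mathfrak{O}$, where $R_1=0$, the cross-product vanishes and $p/q=r/s$ wherever the denominators are nonzero. The passage to $\mathfrak{O}^*=\mathfrak{O}\setminus\Omega$ is what keeps this non-vacuous: the locus $\omega_1=\omega_2=0$ lies in the pendulum family (\ref{kh8}), so along a trajectory in $\mathfrak{O}^*$ the denominators $2\omega_1,2\omega_2$ do not vanish identically; the common value $\lambda$ is thus defined for almost all $t$ and extends by continuity (if $s=0$ while $q\neq0$, the relation $ps=qr$ forces $r=0$ and $\lambda=p/q$).

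For the constancy I would compute the derivatives of $p$ and $q$ through the vector form $\mathbf{I}\dot{\bs\omega}=\mathbf{I}\bs\omega\times\bs\omega+\mathbf{e}_1\times\bs\alpha+\mathbf{e}_2\times\bs\beta$, $\dot{\bs\alpha}=\bs\alpha\times\bs\omega$, $\dot{\bs\beta}=\bs\beta\times\bs\omega$. In $\dot p$ the two gyroscopic contributions cancel by the triple-product identity $(\mathbf{I}\bs\omega\times\bs\omega)\cdot\bs\alpha+\mathbf{I}\bs\omega\cdot(\bs\alpha\times\bs\omega)=0$, leaving only the torque term
\begin{equation*}
\dot p=(\mathbf{e}_2\times\bs\beta)\cdot\bs\alpha=-\gamma_2,\qquad\text{and likewise}\qquad\dot r=(\mathbf{e}_1\times\bs\alpha)\cdot\bs\beta=\gamma_1 .
\end{equation*}
By contrast $\dot q=2\dot\omega_1=\omega_2\omega_3+\beta_3$ retains the gyroscopic term $\omega_2\omega_3$, since there is nothing to cancel it when $\mathbf{I}\bs\omega$ is paired with the body-fixed vector $\mathbf{e}_1$.

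Constancy of $\lambda=p/q$ is then equivalent to the vanishing on $\mathfrak{O}$ of the Wronskian-type numerator
\begin{equation*}
N:=\dot p\,q-p\,\dot q=-2\gamma_2\omega_1-p\,(\omega_2\omega_3+\beta_3),
\end{equation*}
and verifying this is where I expect the real work to lie. $N$ is a polynomial of degree four, and the task is to exhibit it as a combination $N=A\,R_1+B\,R_2+(\text{multiples of the geometric integrals }(\ref{kh2}))$ with polynomial coefficients. The retained gyroscopic term $\omega_2\omega_3$ in $\dot q$ is exactly what makes the single relation $R_1=0$ insufficient and forces the second relation into play: its top-degree part is $-\omega_2\omega_3\,p$, which must be matched against the $\omega_3$-carrying leading part of $R_2$ together with $\omega_3\,R_1$, the lower-degree remainder being absorbed by the remaining parts of $A$, by $B\,R_2$, and by the orthogonality and normalization relations $\bs\alpha\cdot\bs\beta=0$, $|\bs\alpha|^2=a^2$, $|\bs\beta|^2=b^2$ (which are needed to handle terms such as $\alpha_3\beta_3\omega_3$). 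Establishing this ideal membership — equivalently, substituting the two invariant relations to annihilate $N$ identically — is the main obstacle; the corresponding identity $N'=\dot r\,s-r\,\dot s=0$ for the second ratio then follows by the symmetric computation, which completes the proof.
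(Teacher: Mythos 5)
Your first half coincides with the paper's own argument: writing $M_1=2\omega_1$, $M_2=2\omega_2$, $M_\alpha={\bf I}{\bs\omega}{\bs\cdot}{\bs\alpha}$, $M_\beta={\bf I}{\bs\omega}{\bs\cdot}{\bs\beta}$, the relation $R_1=0$ is literally $M_\alpha M_2-M_\beta M_1=0$, i.e.\ equality of the two ratios off the locus $\omega_1=\omega_2=0$. The problem is the second half. You correctly reduce constancy to the vanishing on $\mathfrak{O}$ of $N=\dot M_\alpha M_1-M_\alpha\dot M_1=-2\gamma_2\omega_1-M_\alpha(\omega_2\omega_3+\beta_3)$, but then you only \emph{announce} that $N$ should be expressible as $A\,R_1+B\,R_2+(\hbox{multiples of }(\ref{kh2}))$ and call this ``the main obstacle.'' No coefficients $A,B$ are produced and no verification is offered; since this identity \emph{is} the constancy assertion, the proposal at this point restates the theorem rather than proving it. Nothing in your text excludes the possibility that the reduction fails, so this is a genuine gap, not a routine omission.

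It is worth noting how the paper avoids the computation you got stuck on. Instead of differentiating one ratio, it differentiates the symmetrized quantity $S=-(M_\alpha M_1+M_\beta M_2)/(M_1^2+M_2^2)$ and obtains, using only the equations of motion (\ref{kh1}), the identity
\[
\frac{dS}{dt}=\frac{1}{4(M_1^2+M_2^2)^2}\bigl[(M_1^2+M_2^2)\omega_3+4\alpha_3M_1+4\beta_3M_2\bigr](M_\beta M_1-M_\alpha M_2),
\]
whose last factor is $-2R_1$; hence $S$ is constant on $\mathfrak{O}^*$ with no appeal to $R_2$ or to the geometric integrals, and since $R_1=0$ makes both of your ratios equal to $-S$, they are constant. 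So the heavy ideal-membership check is avoidable by choosing the right combination of the two ratios: the derivative of the symmetric combination factors through the antisymmetric one, which is exactly $R_1$. If you insist on your direct route, you must actually exhibit $N$ in the ideal generated by $R_1$ and $R_2$ (your instinct that $R_2$ is unavoidable there is right, since $\dot R_1$ itself involves $R_2$); as written, the decisive step is missing.
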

\begin{proof} Denote
$$
\begin{array}{c}
{\bf{M}} = {\bf{I}\bs{\omega}}, \\[1mm]
M_1  = {\bf{I}\bs{\omega}}{\bs \cdot}{\bf{e}}_1  = 2\omega _1 ,\quad M_2
 = {\bf{I}\bs{\omega}}{\bs \cdot}{\bf{e}}_2  = 2\omega _2 , \\[1mm]
M_\alpha   = {\bf{I}\bs{\omega}}{\bs \cdot}{\bs{\alpha }} = 2\alpha _1 \omega _1  +
2\alpha _2 \omega _2  + \alpha _3 \omega _3 ,\quad M_\beta   = {\bf{I}\bs{\omega}}{\bs
\cdot}{\bs{\beta }} = 2\beta _1 \omega _1  + 2\beta _2 \omega _2  + \beta _3 \omega _3 .
\end{array}
$$
The first equation (\ref{kh6}) becomes
\begin{equation}\label{kh14}
{{M_\alpha  } \over {M_1 }} - {{M_\beta  } \over {M_2 }} = 0.
\end{equation}
Introduce the function
$$
S =  - {{M_\alpha  M_1  + M_\beta  M_2 } \over {M_1^2  + M_2^2 }}.
$$
Its derivative in virtue of equations (\ref{kh1}) is
$$
{{dS} \over {dt}} = {1 \over {4(M_1^2  + M_2^2 )^2 }}[(M_1^2  +
M_2^2 )\omega _3  + 4\alpha _3 M_1  + 4\beta _3 M_2 ](M_\beta  M_1
- M_\alpha  M_2 ).
$$
Then (\ref{kh14}) implies that the right hand part vanishes
identically. Therefore, $S$ is the partial integral on the set
$\mathfrak{O}^*$ . Let $s$ stand for the constant of this integral,
\begin{equation}\label{kh15}
{{M_\alpha  M_1  + M_\beta  M_2 } \over {M_1^2  + M_2^2 }} =  - s.
\end{equation}

From (\ref{kh14}), (\ref{kh15}) we obtain that $ M_\alpha = - sM_1$,
$M_\beta =  - s M_2$ with the constant value $s$.
\end{proof}

{\bf Remark 1}. In virtue of condition (\ref{kh14}) the function $S$
can be also written in the form
\begin{equation}\label{kh16}
S =  - {1 \over 2}({{M_\alpha  } \over {M_1 }} + {{M_\beta  } \over
{M_2 }}).
\end{equation}

{\bf Remark 2}. Note the interesting geometric feature of the
kinetic momentum vector motion on the trajectories considered.
Introduce the immovable orthonormal basis in the $O{\bs{\alpha \beta
}}$-plane
$$
{\bs{\nu }}_1  = {{\bs{\alpha }} \over a},\quad {\bs{\nu }}_2  =
{{\bs{\beta }} \over b}.
$$
Let $m_1  = {\bf{M}}{\bs \cdot}{\bs{\nu }}_1 ,\;m_2  = {\bf{M}}{\bs
\cdot}{\bs{\nu }}_2$. Then $M_\alpha   = a m_1 ,\;M_\beta   = b m_2
$, and condition (\ref{kh14}) yields
$$
{{M_2 } \over {M_1 }} = {b \over a}\,{{m_2 } \over {m_1 }},
$$
or
$$
\mathop{\rm tg}\nolimits \vartheta  = {b \over a}\,\mathop{\rm
tg}\nolimits \vartheta _0 ,
$$
where $\vartheta ,\vartheta _0 $ are the polar angles of the
projections of the vector ${\bf{M}}$, respectively, onto the
equatorial plane of the body and onto the plane of the direction
vectors of the forces fields.

\begin{theorem} On the set $\mathfrak{O}$  system $(\ref{kh1})$
has the partial integral
\begin{equation}\label{kh17}
\begin{array}{l}
{\rm T} = (\alpha _3 \omega _1  + \beta _3 \omega _2 )\omega _3  + 2\alpha _1 \omega _1^2
+ 2(\alpha _2  + \beta _1 )\omega _1 \omega _2  + 2\beta _2 \omega _2^2 - \\[2mm]
        \qquad{} - 2(\alpha _1 \beta _2  - \alpha _2 \beta _1 ) + a^2  + b^2 .
\end{array}
\end{equation}
\end{theorem}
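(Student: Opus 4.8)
The assertion is that ${\rm T}$ is a partial integral on $\mathfrak{O}$, meaning that $d{\rm T}/dt$ computed along the flow of $(\ref{kh1})$ vanishes at every point where $R_1=0$ and $R_2=0$. The plan is to prove this by direct verification: I would differentiate $(\ref{kh17})$ using $(\ref{kh1})$ together with the Poisson equations $\dot\alpha_3=\alpha_1\omega_2-\alpha_2\omega_1$, $\dot\beta_3=\beta_1\omega_2-\beta_2\omega_1$, and then show that the resulting polynomial lies in the ideal generated by $R_1$ and $R_2$, i.e. $d{\rm T}/dt=\lambda_1R_1+\lambda_2R_2$ for suitable coefficients $\lambda_1,\lambda_2$. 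A degree count suggests the $\lambda_i$ are of first order in the phase variables, and the constants $a^2+b^2$ are inert under differentiation, so they play no role except to normalise the value of the integral.

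I would organise the computation around the close structural resemblance between ${\rm T}$ and $R_2$. Setting $P=2\alpha_1\omega_1^2+2(\alpha_2+\beta_1)\omega_1\omega_2+2\beta_2\omega_2^2$ and $Q=\alpha_3\omega_1+\beta_3\omega_2$, one has ${\rm T}=Q\omega_3+P-2\gamma_3+a^2+b^2$ with $\gamma_3=\alpha_1\beta_2-\alpha_2\beta_1$, and exactly the same blocks $Q\omega_3$ and $P$ appear inside $R_2$. Since $\dot\gamma_3=\gamma_1\omega_2-\gamma_2\omega_1$ by the Poisson equation for $\bs\gamma$, I expect the derivatives of $Q\omega_3+P$ to recombine with $-2\dot\gamma_3$ into multiples of $R_1$ and $R_2$; the first relation $(\ref{kh14})$, $M_\alpha/M_1=M_\beta/M_2$, should be the algebraic identity that forces the surplus terms to cancel.

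To locate the answer quickly I would first run the argument on $\mathfrak{O}^*$ using Theorem~1. There the relations $M_\alpha=-sM_1$, $M_\beta=-sM_2$ allow one to eliminate $\alpha_3\omega_3$ and $\beta_3\omega_3$, which collapses $Q\omega_3+P$ to $-2s(\omega_1^2+\omega_2^2)$, so that ${\rm T}=-2s(\omega_1^2+\omega_2^2)-2\gamma_3+a^2+b^2$ with $s$ constant along the motion. Differentiating and using $d(\omega_1^2+\omega_2^2)/dt=\beta_3\omega_1-\alpha_3\omega_2$ gives $d{\rm T}/dt=-2s(\beta_3\omega_1-\alpha_3\omega_2)-2(\gamma_1\omega_2-\gamma_2\omega_1)$; the same momentum relations, in the form $\alpha_1\omega_1+\alpha_2\omega_2+s\omega_1=-\frac{1}{2}\alpha_3\omega_3$ and its $\beta$-analogue, reduce $\gamma_2\omega_1-\gamma_1\omega_2$ to precisely $s(\beta_3\omega_1-\alpha_3\omega_2)$, so the two contributions cancel and $d{\rm T}/dt=0$. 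This establishes the claim on the dense subset $\mathfrak{O}^*$, and it passes to all of $\mathfrak{O}$ by continuity, since $\mathfrak{O}^*=\mathfrak{O}\setminus\Omega$ is dense in $\mathfrak{O}$ and $d{\rm T}/dt$ is polynomial.

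The main obstacle will be organisational rather than conceptual: the raw expression for $d{\rm T}/dt$ is a dense polynomial in $(\bs\omega,\bs\alpha,\bs\beta)$, and the work lies entirely in grouping its monomials so that the factors $R_1,R_2$ become visible. The Theorem~1 reduction tells me in advance which identities do the work — the cancellation is driven by $(\ref{kh14})$ and the two momentum relations — so I would use it to guess the multipliers $\lambda_1,\lambda_2$ and then, if a proof valid verbatim on the whole of $\mathfrak{O}$ (including the pendulum locus $\Omega$) is required without the density argument, confirm the identity $d{\rm T}/dt=\lambda_1R_1+\lambda_2R_2$ by matching terms directly.
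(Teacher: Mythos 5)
Your main plan --- differentiate ${\rm T}$ along (\ref{kh1}) and exhibit $d{\rm T}/dt$ as an element of the ideal generated by $R_1,R_2$ with first-degree coefficients --- is exactly what the paper does, and your structural guess is even slightly conservative: the paper's proof consists of the single identity $d{\rm T}/dt=\frac{1}{4}\omega_3R_1$, so only $R_1$ enters and the multiplier is $\omega_3$ up to a constant factor. Your decomposition ${\rm T}=\omega_1M_\alpha+\omega_2M_\beta-2\gamma_3+a^2+b^2$ is the right way to see this: using $\dot M_\alpha=-\gamma_2$, $\dot M_\beta=\gamma_1$ and the identity $\gamma_2\omega_1-\gamma_1\omega_2=\frac12(\alpha_3M_\beta-\beta_3M_\alpha)$, everything collapses to a constant multiple of $\omega_3(\omega_2M_\alpha-\omega_1M_\beta)=\omega_3R_1$. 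Written out, that computation \emph{is} the paper's proof, and it is valid at every point of $P^6$, hence on all of $\{R_1=0\}\supset\mathfrak{O}$ with no case distinctions.

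The genuine gap is in the argument you actually present as completed: the passage from $\mathfrak{O}^*$ to $\mathfrak{O}$ by density. The set $\mathfrak{O}^*=\mathfrak{O}\setminus\Omega$ is \emph{not} dense in $\mathfrak{O}$: the paper itself establishes, in the proof of Theorem~3, that the pendulum trajectories (\ref{kh8}) of type (\ref{kh9}) with $-2\sqrt{ab}<h<2\sqrt{ab}$ are isolated from $\mathfrak{O}^*$ (for them the value of $s$ is not even defined), so $\mathop{\rm Cl}\nolimits(\mathfrak{O}^*)$ is a proper subset of $\mathfrak{O}$ and continuity says nothing about those trajectories. The repair is cheap --- on $\Omega$ one has $\omega_1=\omega_2=\alpha_3=\beta_3=0$ and $\gamma_3\equiv\pm ab$, so ${\rm T}\equiv(a\mp b)^2$ is constant by inspection --- or one simply invokes the global polynomial identity above. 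A further small caution about your $\mathfrak{O}^*$ computation: you differentiate the trajectory-dependent expression $-2s(\omega_1^2+\omega_2^2)-2\gamma_3+a^2+b^2$, which is legitimate because it agrees with ${\rm T}$ identically in $t$ along the given trajectory, but it makes Theorem~2 appear to rest on Theorem~1, whereas the direct identity shows it does not.
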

\begin{proof} The derivative of (\ref{kh17}) in virtue of system
(\ref{kh1}) is
$$
{{d{\rm T}} \over {dt}} = {1 \over 4}\omega _3 R_1,
$$
and vanishes identically on $\mathfrak{O}$.
\end{proof}

Denote by $\tau $ the constant of the integral ${\rm T}$.

In the work \cite{bib1} equations (\ref{kh6}) are obtained from the
condition that the function with Lagrange's multipliers $s,\tau$
$$
2G + (\tau  - p^2 )H + sK
$$
has a critical point. Comparing (\ref{kh16}), (\ref{kh17}) with the
expressions for $s,\tau $ in \cite{bib1}, we see that these
multipliers are the constants of the above given integrals $S,{\rm
T}$.

According to (\ref{kh3}) introduce the positive parameters $p,r$ as
follows
$$
p^2  = a^2  + b^2 ,\quad r^2  = a^2  - b^2 .
$$
Let $h,k,g$ be the constants of the general integrals (\ref{kh4}).
Then equations (50) of the work \cite{bib1} give the following
relations on the set $\mathfrak{O}^*$,
\begin{equation}\label{kh18}
h = s + \frac {p^2 -\tau}{2s}, \; k = \tau  + \frac{\tau^2 -
2p^2\tau + r^4} {4s^2}, \; g = \frac{1}{2}(p^2 -\tau)s +
\frac{p^4-r^4} {4s}.
\end{equation}
These relations also can be considered as the parametric equations
of the sheet of the bifurcation diagram of the map (\ref{kh5}).
Eliminating of $\tau $ leads to the equations
\begin{equation}\label{kh19}
\psi (s) = 0,\quad \psi '(s) = 0,
\end{equation}
where
$$
\psi (s) = s^2 (s - h)^2  + (p^2  - k)s^2  - 2gs + {{p^4  - r^4 }
\over 4}.
$$
Under the condition ${\bs{\beta }} = 0$ ($p^2  = r^2  = a^2 $) we
have $\psi (s) = s\varphi (s)$. Thus, relations (\ref{kh19}) are
similar to conditions (\ref{kh12}). Therefore, the family of
trajectories on the set $\mathfrak{O}^*$ generalizes the family of
the especially remarkable motions of the $4^{\mathrm{th}}$ Appelrot
class.

\section{The equations of integral manifolds}
According to (\ref{kh18}) the finctions $S,{\rm T}$ form the
complete set of the first integrals on $\mathfrak{O}^*$. In
particular, the system of equations defining any integral manifold
$\{\zeta \in P^6 :H(\zeta ) = h,K(\zeta ) = k,G(\zeta ) = g\}$ is
now replaced by invariant relations (\ref{kh6}) and the equations
\begin{equation}\label{kh20}
S = s,\; {\rm T} = \tau.
\end{equation}

Introduce the complex change of variables~\cite{bib6} generalizing
the Kowalevski change for the top in the gravity field~\cite{bib9}
($\ri^2=-1$)
\begin{equation}\label{kh21}
\begin{array}{c}
x_1 = (\alpha_1  - \beta_2) + \ri(\alpha_2  + \beta_1),\quad
x_2 = (\alpha_1  - \beta_2) - \ri(\alpha_2  + \beta_1 ), \\
y_1 = (\alpha_1  + \beta_2) + \ri(\alpha_2  - \beta_1), \quad y_2 =
(\alpha_1  + \beta_2) -
\ri(\alpha_2  - \beta_1), \\
 z_1 = \alpha_3  + \ri\beta_3, \quad
z_2 = \alpha_3  - \ri\beta_3,\\
w_1 = \omega_1  + \ri\omega_2 , \quad w_2 = \omega_1  - \ri\omega_2,
\quad w_3 = \omega_3 \,.
\end{array}
\end{equation}

In variables (\ref{kh21}), system (\ref{kh6}), (\ref{kh20}) can be
presented in the form
\begin{equation}\label{kh22}
\begin{array}{l}
(y_2  + 2s)w_1  + x_1 w_2  + z_1 w_3  = 0,  \\
x_2 w_1  + (y_1  + 2s)w_2  + z_2 w_3  = 0,  \\
x_2 z_1 w_1  + x_1 z_2 w_2  + (\tau  - x_1 x_2 )w_3  = 0,  \\
2sw_1 w_2  - (x_1 x_2  + z_1 z_2 ) + \tau  = 0. \end{array}
\end{equation}
These equations must be added by geometrical integrals (\ref{kh2}),
which in variables (\ref{kh21}) can be written as follows
\begin{equation}\label{kh23}
z_1^2  + x_1 y_2  = r^2 ,\quad z_2^2  + x_2 y_1  = r^2 ,\quad x_1
x_2 + y_1 y_2  + 2z_1 z_2  = 2p^2.
\end{equation}

The space of variables (\ref{kh21}) has dimension 9 regarding the
fact that the following pairs must be complex conjugate $x_2 =
\overline{x_1}$, $y_2  = \overline{y_1}$, $z_2 = \overline{z_1}$,
and that $w_3$ is real. Seven relations (\ref{kh22}), (\ref{kh23})
define then the integral manifold. In the case when the the
integrals $S,{\rm T}$ are independent on this manifold it consists
of two-dimensional tori bearing quasi-periodic motions.

\section{Bifurcation diagram}
Introduce the integral map $J$ of the the dynamical system induced
on the closure of the set $\mathfrak{O}^*$,
$$
J(\zeta ) = (S(\zeta ),{\rm T}(\zeta )) \in {\bf{R}}^2, \quad \zeta
\in \mathop{\rm Cl}\nolimits (\mathfrak{O}^*).
$$
Due to the obvious compact character of the inverse images of the
points of ${\bf{R}}^2 $ the bifurcation diagram $\Sigma $ of the map
$J$ coincides with the set of its critical values.

\begin{theorem} The bifurcation diagram of the map
\begin{equation}\label{kh24}
J = S \times {\rm T}:\,\mathop{\rm Cl}\nolimits (\mathfrak{O}^ *  )
\to {\bf{R}}^2
\end{equation}
consists of the following subsets of the $(s,\tau)$-plane:

$\;1^ \circ)\; \tau  = (a + b)^2 ,\;s \in [ - a,0) \cup [b, + \infty
);$

$\;2^ \circ)\; \tau  = (a - b)^2 ,\;s \in [ - a, - b] \cup (0, +
\infty );$

$\;3^ \circ)\; s =  - a,\;\tau  \geqslant (a - b)^2 ;$

$\;4^ \circ)\; s =  - b,\;\tau  \geqslant (a - b)^2 ;$

$\;5^ \circ)\; s = b,\;\tau  \leqslant (a + b)^2 ;$

$\;6^ \circ)\; s = a,\;\tau  \leqslant (a + b)^2 ;$

$\;7^ \circ)\; \tau  = 0,\;s \in (0, + \infty );$

$\;8^ \circ)\; \tau  = a^2  + b^2  - 2s^2  + 2\sqrt {\mathstrut (a^2
- s^2 )(b^2 - s^2 )} ,\;s \in [ - b,0);$

$\;9^ \circ)\; \tau  = a^2  + b^2  - 2s^2  - 2\sqrt {\mathstrut(a^2
- s^2 )(b^2 - s^2 )} ,\;s \in (0,b];$

$10^ \circ)\; \tau  = a^2  + b^2  - 2s^2  + 2\sqrt {\mathstrut(s^2
- a^2 )(s^2 - b^2 )} ,\;s \in [a, + \infty ).$
\end{theorem}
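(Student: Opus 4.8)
The plan is to work directly on the fibers of $J$. For fixed $(s,\tau)$ the preimage $J^{-1}(s,\tau)\subset\mathop{\rm Cl}\nolimits(\mathfrak{O}^*)$ is the real set cut out in the space of variables (kh21) by the four equations (kh22), the geometric integrals (kh23), and the conjugacy conditions $x_2=\overline{x_1}$, $y_2=\overline{y_1}$, $z_2=\overline{z_1}$, $w_3\in\mathbf R$. Since all fibers are compact, $\Sigma$ is exactly the set of critical values of $J$, so I want to locate the $(s,\tau)$ at which this set becomes singular or changes topological type. I expect three mechanisms to contribute, and would organize the proof around them.

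The first three equations of (kh22) form a homogeneous linear system $A\mathbf w=0$ for $\mathbf w=(w_1,w_2,w_3)$, with $A$ the matrix whose rows are $(y_2+2s,\,x_1,\,z_1)$, $(x_2,\,y_1+2s,\,z_2)$, $(x_2z_1,\,x_1z_2,\,\tau-x_1x_2)$. On $\mathfrak{O}^*$ one has $\mathbf w\neq0$, hence $\det A=0$; solving the first two equations for $w_1,w_2$ as multiples of $w_3$ and inserting them into the fourth equation of (kh22) expresses $w_3^2$ through the configuration, while the third equation reproduces $\det A=0$. Substituting back into (kh23), I would reduce the fiber to the solvability, in conjugate variables, of a single low-degree polynomial $P$ in one real auxiliary quantity; generically $\mathop{\rm rank}\nolimits A=2$, the kernel is a line, and the fiber is a two-torus as stated in the excerpt. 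A fold of $J$ then occurs where $P$ acquires a multiple root, and I expect its discriminant to collapse to the conic $\tau^2-2(p^2-2s^2)\tau+r^4=0$, equivalently $\tau=p^2-2s^2\pm2\sqrt{(a^2-s^2)(b^2-s^2)}$; its real branches, restricted to the region carrying a real preimage, give curves $8^\circ$, $9^\circ$, $10^\circ$.

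The horizontal lines $1^\circ$ and $2^\circ$ I would obtain separately, from the pendulum subfamily $\Omega\subset\mathop{\rm Cl}\nolimits(\mathfrak{O}^*)$ of (kh8): substituting $\alpha_1=a\cos\theta$, $\alpha_2=-a\sin\theta$, $\beta_1=\pm b\sin\theta$, $\beta_2=\pm b\cos\theta$, $\omega_1=\omega_2=0$, $\alpha_3=\beta_3=0$ into (kh17) gives $\alpha_1\beta_2-\alpha_2\beta_1=\pm ab$ and hence ${\rm T}=a^2+b^2\mp2ab=(a\mp b)^2$, so $\Omega$ maps exactly onto $\tau=(a-b)^2$ and $\tau=(a+b)^2$. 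Since $M_1=M_2=0$ on $\Omega$, the value $s$ is the limit of $S$ from the adjacent generic trajectories, and computing that limit along $\omega_3=\dot\theta$ fixes the admissible $s$-intervals in $1^\circ$, $2^\circ$. The vertical lines $3^\circ$–$6^\circ$ ($s=-a,-b,b,a$) and the line $7^\circ$ ($\tau=0$) I would read off from the remaining rank drops and from the boundary of the reality region, i.e. from the places where $a^2-s^2$, $b^2-s^2$, or $w_3^2$ changes sign; the accompanying inequalities on $\tau$ (respectively $s$) are determined by where a conjugate-admissible solution survives.

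Finally I would glue the branches together by matching endpoints: at $s=\pm b$ the conic degenerates to $(\tau-r^2)^2=0$ and at $s=\pm a$ to $(\tau+r^2)^2=0$, so $9^\circ$ runs from $(b,r^2)$ to $(0^+,(a-b)^2)$, $8^\circ$ from $(-b,r^2)$ to $(0^-,(a+b)^2)$, and $10^\circ$ issues from $(a,-r^2)$ toward $s\to+\infty$; these points attach to the lines $1^\circ$, $2^\circ$, $4^\circ$, $5^\circ$, $6^\circ$ and close up the diagram. The main difficulty is not deriving the curves but the reality bookkeeping: cutting each algebraic curve down to its genuinely realizable segment using $a>b>0$, the sign of $w_3^2$, and the three conjugacy constraints, and verifying that no spurious branch of $P$ or of the conic contributes a phantom component to $\Sigma$.
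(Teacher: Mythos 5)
Your overall strategy --- extract the candidate curves as a discriminant locus of the algebraic system (\ref{kh22}), (\ref{kh23}) and then prune by reality --- is viable, but it is closer to the paper's Section~5 than to its actual proof of the theorem: the resultant (\ref{kh41}) of the separated polynomial $Q(w)$ contains factors reproducing \emph{all} ten families $1^\circ$--$10^\circ$ (the lines $s=\pm a,\pm b$, $\tau=0$, $\tau=(a\pm b)^2$ as well as the conic $\tau^2-2(p^2-2s^2)\tau+r^4=0$), not only the curves $8^\circ$--$10^\circ$ that you expect from your polynomial $P$. The paper's proof instead reduces the fiber over $(s,\tau)$ to a planar ``region of possible motion'': with $x^2=x_1x_2$, $z^2=z_1z_2$ it derives four sign conditions $\Phi_+\geqslant0$, $\Phi_-\leqslant0$ (from the geometric integrals) and $\Psi_+\geqslant0$, $\Psi_-\leqslant0$ (from the vanishing determinant you also identify), and in the coordinates $s_1,s_2$ these become a fixed rectangle $\Pi_1$ and a moving parallelogram $\Pi_2$ with attached half-strips. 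The bifurcation values are then read off from elementary incidences (a vertex on a side, sides becoming parallel, a half-strip degenerating to a ray), and the intervals in $1^\circ$--$10^\circ$ come from discarding the components of the complement whose region is empty.

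The genuine gap in your proposal is exactly the step you defer as ``reality bookkeeping''. First, the reduction of the fiber to ``solvability of a single low-degree polynomial $P$ in one real auxiliary quantity'' is asserted but not constructed: the projected fiber is cut out by four simultaneous sign conditions in the two quantities $(x,z)$, so a one-variable discriminant only becomes available after a separation of variables, which the paper itself postpones to a later publication. Second, the content of the theorem is not the list of curves (which a discriminant computation would deliver) but the precise segments of each one, e.g.\ $s\in[-a,0)\cup[b,+\infty)$ on $\tau=(a+b)^2$, and you give no mechanism for computing these. For the pendulum lines the paper obtains the intervals by pushing $\tau=(a\pm b)^2$ through relations (\ref{kh18}) and imposing the conditions (\ref{kh9}), (\ref{kh10}) under which motions (\ref{kh8}) actually lie in $\mathop{\rm Cl}\nolimits(\mathfrak{O}^*)$; note in particular that the trajectories with $-2\sqrt{ab}<h<2\sqrt{ab}$ on $k=(a-b)^2$ are isolated from $\mathfrak{O}^*$ and carry no value of $s$ at all, a subtlety your ``limit of $S$ from adjacent generic trajectories'' must detect, and that a single pendulum trajectory can be a limit of generic ones with two distinct values of $s$. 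For the remaining curves the admissible intervals require the explicit $\Pi_1$-versus-$\Pi_2$ analysis or an equivalent; without it the proposal is a plausible programme rather than a proof.
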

\begin{proof}
Any point of dependence of equations (\ref{kh6}) are considered
critical for the map (\ref{kh24}) by definition. Take the points of
trajectories (\ref{kh8}) that belong to the closure of the set
$\mathfrak{O}^*$ and calculate the corresponding values of $J$.
These values must then be included in the bifurcation diagram. We
obtain those values $(s,\tau )$ for which equations (\ref{kh18})
give (\ref{kh9}), (\ref{kh10}). It is easily checked that the
half-line (\ref{kh10}) completely belongs to the surface given by
(\ref{kh18}). It corresponds to the case $1^\circ$. Consider the
half-line defined by (\ref{kh9}). The points of it lie on
(\ref{kh18}) only if $h^2 \geqslant 4ab$. The corresponding set is
given by $2^\circ$. The segment of (\ref{kh9}) in the limits
$$
- 2\sqrt {ab}  < h < 2\sqrt {ab}
$$
is the one-dimensional part of the bifurcation diagram of the map
(\ref{kh5}). For corresponding trajectories (\ref{kh8}) the value
$s$ is not defined. It means that such trajectories are isolated
from the set $\mathfrak{O}^*$. The isolated points in the
bifurcation diagrams of the reduced systems or of the systems
restricted to iso-energetic surfaces were met before only in the
Clebsch and Lagrange cases.

To find critical motions in $\mathfrak{O}^*$ use system
(\ref{kh22}), (\ref{kh23}). Introduce the variables $x,z$,
\begin{equation}\label{kh25}
x^2  = x_1 x_2 ,\quad z^2  = z_1 z_2 .
\end{equation}
It follows from the last equation (\ref{kh23}) that
\begin{equation}\label{kh26}
y_1 y_2  = 2p^2  - x^2  - 2z^2 ,
\end{equation}
and the first two equations give
\begin{equation}\label{kh27}
\begin{array}{l}
(z_1  + z_2 )^2  = 2r^2  - (x_1 y_2  + x_2 y_1 ) + 2z^2 ,  \\
(z_1  - z_2 )^2  = 2r^2  - (x_1 y_2  + x_2 y_1 ) - 2z^2 .
\end{array}
\end{equation}
Eliminate $z_1 ,z_2 $ in (\ref{kh23}):
$$
(r^2  - x_1 y_2 )(r^2  - x_2 y_1 ) = z^4 .
$$
Then using (\ref{kh26}) we obtain
\begin{equation}\label{kh28}
r^2 (x_1 y_2  + x_2 y_1 ) = r^4  + 2p^2 x^2  - (x^2  + z^2 )^2 .
\end{equation}
Denote
$$
\Phi _ \pm  (x,z) = (x^2  + z^2  \pm r^2 )^2  - 2(p^2  \pm r^2 )x^2.
$$
From (\ref{kh27}), (\ref{kh28}) we have
$$
r^2 (z_1  + z_2 )^2  = \Phi _ +  (x,z),\quad r^2 (z_1  - z_2 )^2  =
\Phi _ -  (x,z).
$$
Therefore, $x,z$ satisfy the inequalities
\begin{equation}\label{kh29}
\Phi _ +  (x,z) \geqslant 0,\quad \Phi _ -  (x,z) \leqslant 0.
\end{equation}

Notice that the equilibria of system (\ref{kh1}) are included in the
family of motions $\Omega $. In all other cases the determinant of
the first three equations (\ref{kh22}) in ${w_i \;(i = 1,2,3)}$ is
identically zero. Eliminating $z_1^2$, $z_2^2$, and $y_1 y_2 $ with
the help of (\ref{kh23}), (\ref{kh26}), we obtain
\begin{equation}\label{kh30}
\begin{array}{l}
2s[(r^2 x_1  - \tau y_1 ) + (r^2 x_2  - \tau y_2 )] = - r^2 (x_1 y_2  + x_2 y_1 ) +\\
\quad  + 2[2s^2 (\tau  - x^2 ) + p^2 (\tau  + x^2 ) - \tau (x^2  +
z^2 )].
\end{array}
\end{equation}
On the other hand, (\ref{kh25}) and (\ref{kh26}) yield
\begin{equation}\label{kh31}
(r^2 x_1  - \tau y_1 )(r^2 x_2  - \tau y_2 ) = r^4 x^2  + \tau (2p^2
- x^2  - 2z^2 ) - r^2 \tau (x_1 y_2  + x_2 y_1 ).
\end{equation}

Denote
$$
\sigma  = \tau ^2  - 2p^2 \tau  + r^4 ,\quad \chi  = \sqrt k
\geqslant 0.
$$
Then from the second relation (\ref{kh18}) the identity follows
\begin{equation}\label{kh32}
4s^2 \chi ^2  = \sigma  + 4s^2 \tau .
\end{equation}
Introduce the complex conjugate pair
$$
\mu _1  = r^2 x_1  - \tau y_1 ,\quad \mu _2  = r^2 x_2  - \tau y_2 .
$$
Eliminating the expression $x_1 y_2 + x_2 y_1$ in (\ref{kh30}),
(\ref{kh31}) with the help of (\ref{kh28}), we obtain the system
\begin{equation}\label{kh33}
\begin{array}{c}
2s(\mu _1  + \mu _2 ) = (x^2  + z^2  - \tau )^2  - 4s^2 x^2  - \sigma + 4s^2 \tau , \\
\mu _1 \mu _2  = \tau (x^2  + z^2  - \tau )^2  + \sigma x^2  - \tau
\sigma .
\end{array}
\end{equation}
Choose
$$
\lambda _1  = \sqrt {2s\mu _1  + \sigma } ,\quad \lambda _2  = \sqrt
{2s\mu _2  + \sigma }
$$
to be complex conjugate. Then system (\ref{kh33}) takes the form
$$
(\lambda _1  + \lambda _2 )^2  = \Psi _ +  (x,z),\quad (\lambda _1
- \lambda _2 )^2  = \Psi _ -  (x,z),
$$
where
$$
\Psi _ \pm  (x,z) = (x^2  + z^2  - \tau  \pm 2s\chi )^2  - 4s^2 x^2.
$$
It is solvable if
\begin{equation}\label{kh34}
\Psi _ +  (x,z) \geqslant 0,\quad \Psi _ -  (x,z) \leqslant 0.
\end{equation}

The system of inequalities (\ref{kh29}), (\ref{kh34}) defines the
region of possible motion (the RPM) in the $(x,z)$-plane. The RPM is
the projection of the integral manifold $J_{s,\tau}$. For given
$s,\tau $ the initial phase variables are algebraically expressed in
terms of $x,z$. The bifurcation diagram corresponds to the cases
when the RPM undertakes qualitative transformations as its
parameters $s,\tau$ change.

Introduce the local coordinates $s_1 ,s_2 $ in $(x,z)$-plane:
$$
s_1  = {{x^2  + z^2  + r^2 } \over {2x}},\quad s_2  = {{x^2  + z^2
- r^2 } \over {2x}}.
$$
Inequalities (\ref{kh29}) are immediately solved
\begin{equation}\label{kh35}
s_1^2  \geqslant a^2 ,\quad s_2^2  \leqslant b^2 .
\end{equation}
The corresponding region in the $(x,z)$-plane is shown in Fig.~1 for
the first quadrant. We also point out the coordinate net
$(s_1,s_2)$.

\begin{figure}[ht]\label{fig1}
\begin{center}
\includegraphics[width=8cm,keepaspectratio]{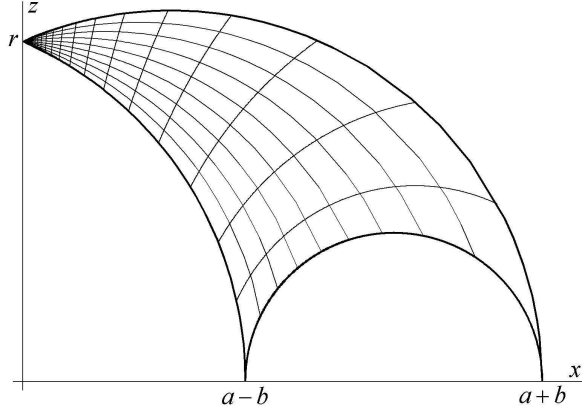}
\caption{The admissible points in the $(x,z)$-plane.}
\end{center}
\end{figure}

Let $\Pi _1 $ be the rectangle in $(s_1 ,s_2)$-plane with the
vertices $s_1 = \pm a$, $s_2  =  \pm b$. To solve system
(\ref{kh34}) express
\begin{equation}\label{kh36}
\begin{array}{c}
\displaystyle{x^2  + z^2  - \tau  = [s_1  + s_2  - {\tau  \over {r^2 }}(s_1  - s_2 )]x,}  \\
\displaystyle{\Psi _ +  (x,z) = x^2 \Lambda _ +  \Lambda _ -  ,\quad
\Psi _ -  (x,z) = x^2 {\rm M}_ +  {\rm M}_-,}
\end{array}
\end{equation}
where
$$
\begin{array}{l}
\displaystyle{\Lambda _ \pm  (s_1 ,s_2 ) = s_1  + s_2  - {{\tau  -
2s\chi}
\over {r^2 }}(s_1  - s_2 ) \pm 2s, } \\
\displaystyle{{\rm M}_ \pm  (s_1 ,s_2 ) = s_1  + s_2  - {{\tau  +
2s\chi } \over {r^2 }}(s_1  - s_2 ) \pm 2s. }
\end{array}
$$
It follows from (\ref{kh34}), (\ref{kh36}) that
\begin{equation}\label{kh37}
\Lambda _ +  (s_1 ,s_2 )\Lambda _ -  (s_1 ,s_2 ) \geqslant 0,\quad
{\rm M}_ +  (s_1 ,s_2 ){\rm M}_ -  (s_1 ,s_2 ) \leqslant 0.
\end{equation}

Consider the parallelogram $\Pi_2$ bounded by the lines $\Lambda _
\pm   = 0$, ${\rm M}_ \pm =0$. The solutions of system (\ref{kh37})
fill two half-strip regions starting at the sides of $\Pi_2$
belonging to the lines $\Lambda _ \pm   = 0$. The example of the RPM
in the $(s_1 ,s_2 )$-plane, i.e., the set of solutions of
inequalities (\ref{kh35}), (\ref{kh37}) is shown in Fig.~2.

\begin{figure}[ht]\label{fig2}
\begin{center}
\includegraphics[width=10cm,keepaspectratio]{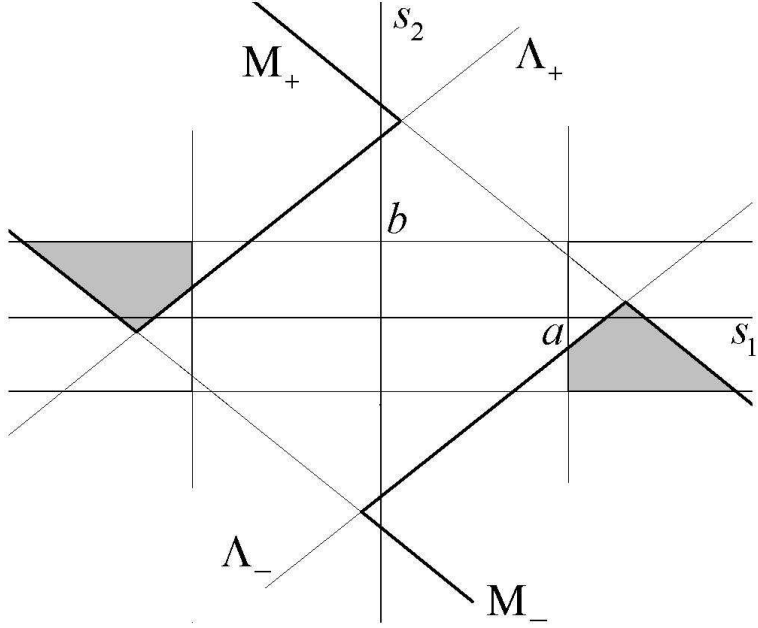}
\caption{The example of a region of possible motions $(a = 1,\;b =
0.4,\;\tau = 1.2,\;s =  - 0.6)$.}
\end{center}
\end{figure}

The further investigation is purely technical. The bifurcations of
the RPM's take place in one of the following cases: the vertex of
one parallelogram out of $\Pi_1 ,\Pi _2$ resides on the boundary of
another; the sides of the parallelograms $\Pi_1 ,\Pi_2 $ happen to
be respectively parallel (the vertices of the RPM go to the
infinity); the half stripe region degenerates and becomes a half
line. Finding all such cases gives the equations for $s,\tau$
pointed out in the theorem. Let $\Delta$ denote the set defined by
these equations in ${\bf{R}}^2 (s,\tau)$. Considering the connected
components of ${\bf{R}}^2 (s,\tau )\backslash \Delta $ we ignore
those of them which correspond to the empty RPM's. The rest of
components (the admissible regions in the integral constants space)
are shadowed in Fig.~3. The bifurcation diagram consists of those
segments of $\Delta$ that are boundaries of the admissible regions
excluding the parts of the axis $s=0$ since this value of $s$ is not
admissible by virtue of (\ref{kh18}). This way we obtain the
inequalities needed. The theorem is proved.\end{proof}

\section{On the possibility of the separation of variables }
Denote
$$
\xi  = x^2  + z^2  - \tau
$$
and consider the following second-order surface in the $(x,\xi ,\mu
)$-space
\begin{equation}\label{kh38}
\mu ^2  = \tau \xi ^2  + \sigma x^2  - \tau \sigma .
\end{equation}
Due to the second equation in (\ref{kh33}) each trajectory is
represented by some curve on this surface.

\begin{figure}[ht]\label{fig3}
\begin{center}
\includegraphics[width=8cm,keepaspectratio]{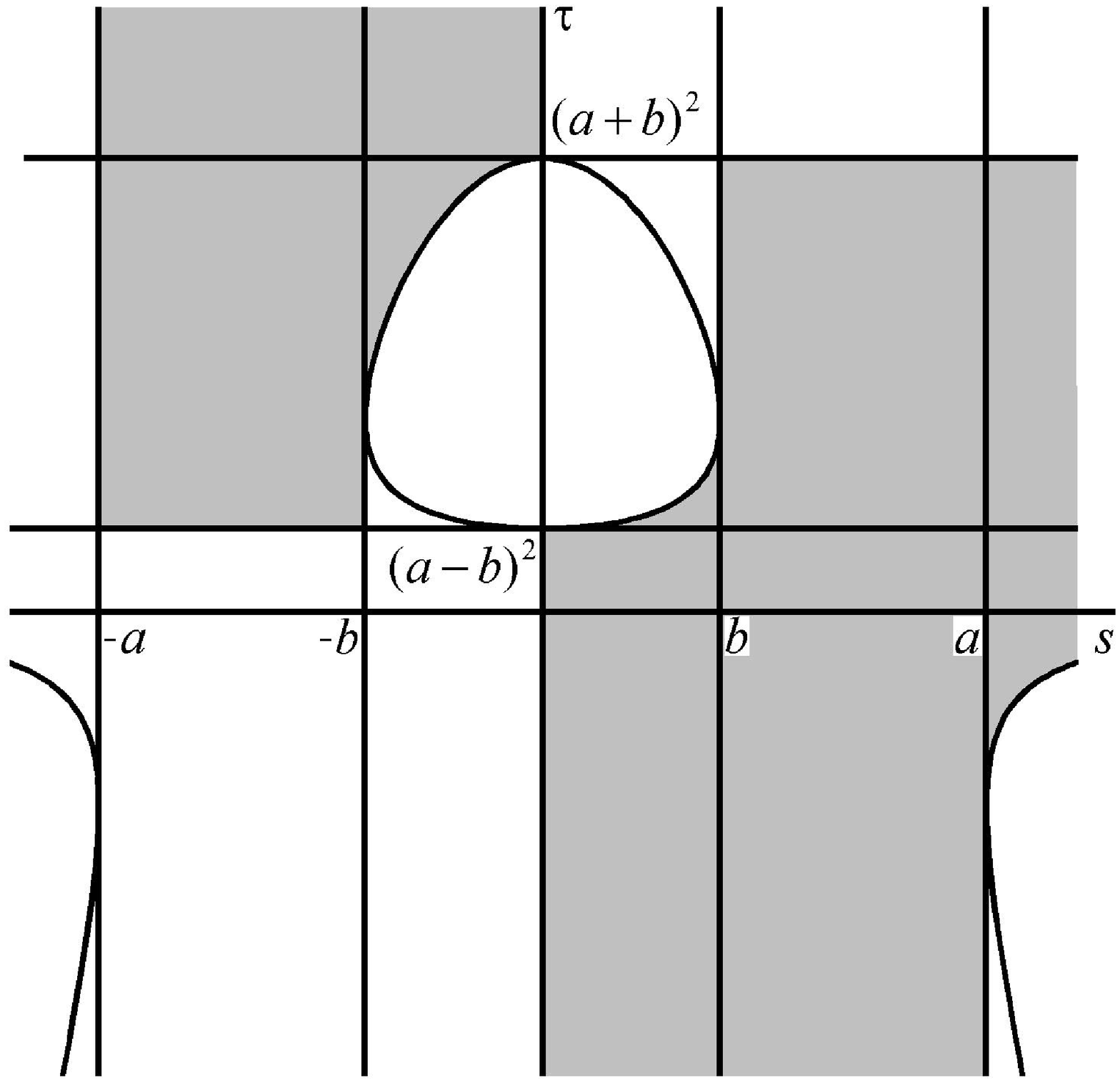}
\caption{Region of existence of motions in the $(s,\tau )$-plane.}
\end{center}
\end{figure}

Obviously, the constants $\tau ,\sigma $ cannot be simultaneously
negative. Therefore, surface (\ref{kh38}) has two families of
rectilinear generators.

The introduced constants satisfy the following two identities
\begin{equation}\label{kh39}
\sigma  + 2\tau (p^2  \pm r^2 ) = (\tau  \pm r^2 )^2 .
\end{equation}
It is easily seen that by virtue of (\ref{kh32}), (\ref{kh39}) the
equations $\Phi _ \pm   = 0$, $\Psi _ \pm = 0$ in the
$(x,\xi)$-plane define the family of lines tangent to the cross
section of surface (\ref{kh38}) by the plane $\mu = 0$. Such line is
then the projection of some generator of surface (\ref{kh38}). Since
each point on surface (\ref{kh38}) belongs exactly to two generators
the parameters of the latter can be chosen as local coordinates in
the region in the $(x,\xi)$-plane covered by surface (\ref{kh38}).

Not regarding any reality conditions, put formally
$$
\xi  = \sqrt \sigma  {{uv + 1} \over {u + v}},\quad x = \sqrt \tau
{{u - v} \over {u + v}}.
$$

After some simple transformations we get
$$
\begin{array}{l}
\displaystyle{\Phi _ +   = {1 \over {(u + v)^2 }}\varphi _1 (u)\varphi _1 (v),\quad \Phi _ -   = {1 \over {(u + v)^2 }}\varphi _2 (u)\varphi _2 (v),}  \\
\displaystyle{\Psi _ +   = {1 \over {(u + v)^2 }}\psi _1 (u)\psi _1
(v),\quad \Psi _ -   = {1 \over {(u + v)^2 }}\psi _2 (u)\psi _2
(v),}
\end{array}
$$
where
$$
\begin{array}{ll}
\varphi _1 (w) = \sqrt \sigma  (1 + w^2 ) + 2(\tau  + r^2 )w, & \varphi _2 (w) = \sqrt \sigma  (1 + w^2 ) + 2(\tau  - r^2 )w, \\
\psi _1 (w) = \sqrt \sigma  (1 + w^2 ) + 4s\chi w, & \psi _2 (w) =
\sqrt \sigma  (1 + w^2 ) - 4s\chi w.
\end{array}
$$

In the plane of the variables $u,v$ inequalities (\ref{kh29}),
(\ref{kh34}) define the set of rectangles with the sides parallel to
the coordinate axes. The fact that each connected component of any
integral manifold is represented by such a rectangle (and in the
case of a bifurcation by a segment or a pair of rectangles having a
common side) means that in these variables the equations of motion
must separate. The corresponding calculations are too long for the
restricted volume of this article and will be presented in another
publication. We only point out the connections with the above
results.

Consider the polynomial
\begin{equation}\label{kh40}
Q(w) = \varphi _1 (w)\varphi _2 (w)\psi _1 (w)\psi _2 (w)
\end{equation}
and find all the cases when it has a multiple root. The resultant of
$Q(w)$ and $Q'(w)$ in $w$ is (up to the constant multiplier)
\begin{equation}\label{kh41}
s^4 \tau ^{12} (\tau ^2  - 2p^2 \tau  + r^4 )^{14} [2s^2  - (p^2  -
r^2 )]^4 [2s^2  - (p^2  + r^2 )]^4 [\tau ^2  - 2(p^2  - 2s^2 )\tau +
r^4 ]^2 .
\end{equation}

As it was already mentioned, by virtue of equations (\ref{kh18}) at
the considered family of motions we have $s \ne 0$. The rest of
cases when expression (\ref{kh41}) vanishes lead to the equations
listed in Theorem~3. Therefore, the bifurcation diagram found above
is the part of the discriminant set of polynomial (\ref{kh40}). Such
phenomenon is also typical for the systems with algebraically
separating variables.

\end{document}